\DeclareMathOperator{\fork}{fork}
\theoremstyle{plain}
\newtheorem{theorem}{Theorem}
\newtheorem{lemma}{Lemma}
\theoremstyle{definition}
\theoremstyle{remark}
\begin{document}
\title{Fork-forests  in bi-colored complete bipartite graphs}
\author{Maria Axenovich\thanks{Departments of Mathematics, Iowa State University, USA, 
   and Karlsruhe Institute of Technology,
   Email: {\tt axenovic@iastate.edu}.
   The research is supported in part by NSF-DMS grant 0901008 \hfill\break},
  Marcus Krug\thanks{Faculty of Informatics, Karlsruhe Institute of Technology, Germany,
   Email: {\tt marcus.krug@kit.edu}\hfill\break},
  Georg Osang\thanks{Karlsruhe Institute of Technology, Germany,
   Email: {\tt georg.osang@student.kit.edu}\hfill\break},
  and Ignaz Rutter\thanks{Faculty of Informatics, Karlsruhe Institute of Technology, Germany,
   Email: {\tt rutter@kit.edu}}}

\maketitle

\begin{abstract} 
  Motivated by the problem in \cite{tsz-elwob-09}, which studies the
  relative efficiency of propositional proof systems, $2$-edge
  colorings of complete bipartite graphs are investigated. It is shown
  that if the edges of $G=K_{n,n}$ are colored with black and white
  such that the number of black edges differs from the number of white
  edges by at most $1$, then there are at least $n(1-1/\sqrt{2})$
  vertex-disjoint forks with centers in the same partite set of $G$.
  Here, a fork is a graph formed by two adjacent edges of different
  colors. The bound is sharp. Moreover, an algorithm running
  in time $O(n^2 \log n \sqrt{n \alpha(n^2,n) \log n})$ and giving a
  largest such fork forest is found.
\end{abstract}
 {\bf Keywords:} {bi-colored star forests, balanced colorings, OBDD}\\

\section{Introduction}

Let $G=K_{n,n}$ with partite sets $X$ and $Y$  be edge colored with two colors. 
We  investigate a global unavoidable substructure in balanced colorings of 
$K_{n,n}$, i.e., those where the number of  edges of one color differs from the 
number of  edges of another color  by at most one. For a two-coloring $c$, of 
$E(G)$  we call a set $S$ of three vertices  a {\it fork} in $G$ centered in $X$ 
(or $Y$)  if  $S$ induces two edges of different  colors sharing a vertex in $X$ 
(or $Y$). A set of  vertex-disjoint forks all centered in $X$ (or $Y$)  is 
called a \emph{fork forest} centered at $X$ (or $Y$). The number of forks in a fork 
forest $F$ is the {\it size of a forest}, denoted $|F|$.  For a coloring $c$ of 
$G$  let $f(G,c)$ be the largest  size of a fork forest centered either at $X$ or at 
$Y$. Finally, let $f(n)$ be the minimum $f(G,c)$ taken over all  balanced 
colorings $c$  using two colors. Our main results is

\begin{theorem}
For any  $n>1$,   $   f(n) = (1- \frac{1}{\sqrt{2}} )n$. There is an algorithm 
finding a largest  fork forest centered at $X$ in any  two-colored  complete 
bipartite graph with partite  sets $X$ and $Y$ and  running in time $O(n^2 \log 
n \sqrt{n \alpha(n^2,n) \log n})$.
\end{theorem}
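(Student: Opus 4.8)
I will prove the three assertions separately: the bound $f(n)\le(1-1/\sqrt2)n$ by an explicit coloring, the reverse bound by a structural/extremal argument, and the running time by reducing the computation to a matching problem. Since $f(n)$ is an integer, the displayed identity is meant up to rounding, and I will freely drop $O(\sqrt n)$-order terms.

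\textbf{The extremal coloring.} Split $X=X_1\cup X_2$ and $Y=Y_1\cup Y_2$ with $|X_2|=|Y_2|=q$ and $|X_1|=|Y_1|=n-q$, where $q$ is as close as possible to $n/\sqrt2$; color the edges of $X_2\times Y_2$ white and all other edges black, then recolor $O(\sqrt n)$ more edges inside $X_1\times Y_1$ so that the two color classes differ by at most $1$ (this perturbation changes every fork number by $O(\sqrt n)$). Every vertex of $X_1$ meets only black edges, hence is never a fork centre, while every vertex of $X_2$ sends its $n-q$ black edges into $Y_1$ and its $q$ white edges into $Y_2$; thus every fork centred at $X$ consumes a vertex of $Y_1$, so a fork forest centred at $X$ has size at most $|Y_1|=n-q$, which is attained. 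The coloring is symmetric under $X\leftrightarrow Y$ (with $X_i\leftrightarrow Y_i$), so the same bound holds at $Y$, and $f(G,c)=n-q=(1-1/\sqrt2)n$.

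\textbf{The lower bound.} Write $\nu_X,\nu_Y$ for the sizes of the largest fork forests centred at $X$, resp. at $Y$; the goal is $\max(\nu_X,\nu_Y)\ge(1-1/\sqrt2)n$ for every balanced coloring. I would first establish a Hall-type description of $\nu_X$: regarding the coloring as a $0/1$ matrix with black-neighbourhoods $B_x$ and white-neighbourhoods $W_x\subseteq Y$ of the rows $x$, a fork forest of size $k$ centred at $X$ is exactly a matching saturating all $2k$ left vertices of the bipartite graph on $\{x^B,x^W:x\in S\}$ (for a $k$-subset $S$ of the mixed rows), with $x^B$ joined to $B_x$ and $x^W$ to $W_x$. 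The defect form of Hall's theorem applied to this split graph -- using that $G=K_{n,n}$ is complete, so $B_x\cup W_x=Y$, which rules out every obstruction involving both copies of a single row -- yields: if $\nu_X<k$, then there is a column set $D$ together with a partition of most rows into a ``white part'' (entirely white towards $D$) and a ``black part'' (entirely black towards $D$), with the total size of the two parts and $|D|$ tied to $n-\nu_X$. Applying this also to the transpose gives the symmetric statement with a row set $E$ and most columns split into a white and a black part towards $E$. Intersecting the row- and column-obstructions, the sub-grid on $E\times D$ restricted to the committed rows and columns has every row constant and every column constant, hence is monochromatic; in the generic regime this rectangle has area at least $(n-\nu_X)(n-\nu_Y)-O(n)$, while balance bounds the larger color class, hence this rectangle, by $n^2/2+O(1)$; with $\nu_X,\nu_Y<k:=(1-1/\sqrt2)n$ this forces $(n-k)^2\le n^2/2+O(n)$, i.e.\ $k\ge(1-1/\sqrt2)n-O(\sqrt n)$, and a careful accounting removes the error term. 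The remaining regimes, where an obstruction is concentrated on few columns (or few rows), instead produce an excess of nearly-monochromatic rows (or columns) all of one fixed colour, again contradicting balance. Making all of this precise -- in particular, extracting the obstructions so that the forced monochromatic rectangle genuinely has near-optimal dimensions, rather than the much weaker bounds from naive inclusion--exclusion -- is the main obstacle, and is exactly where the constant $1/\sqrt2$ gets pinned down.

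\textbf{The algorithm.} Packing the forks centred at $X$ is a constrained $P_3$-packing; the split-graph description turns it into a degree-constrained subgraph (equivalently, matching) problem. Take the vertices $x^B,x^W$ for every row, join them to $B_x$ and $W_x$, add a unit-cost ``skip'' edge $x^Bx^W$, and impose degree $\le1$ on the columns and degree $1$ on the $x^B,x^W$; then $\nu_X$ equals $n$ minus the minimum number of skip edges over feasible subgraphs, i.e.\ the value of a minimum-cost matching/$f$-factor instance on $O(n)$ vertices, $O(n^2)$ edges and $0/1$ costs. Running this for the coloring and for its transpose and returning the larger forest -- read directly off the optimal subgraph -- solves the problem, and a fast weighted matching routine on an instance of this size gives the bound $O(n^2\log n\sqrt{n\,\alpha(n^2,n)\log n})$, whose inverse-Ackermann and logarithmic factors are those of the matching subroutine.
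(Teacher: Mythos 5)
Your extremal coloring is the paper's (one color class forming a complete bipartite $K_{q,q}$ with $q\approx n/\sqrt2$), and your algorithm is also essentially the paper's reduction: split each $x\in X$ into $x^B,x^W$ joined to the black and the white neighborhood respectively, add a unit-cost skip edge $x^Bx^W$, and compute a minimum-cost perfect matching, with the stated running time coming from the cited $0/1$-weight matching algorithm. Those two parts are sound and match the paper.

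The lower bound, however, is where the theorem lives, and there you have a plan rather than a proof. You propose to extract Hall-type obstructions witnessing $\nu_X<k$ and $\nu_Y<k$, intersect them into a monochromatic rectangle of area roughly $(n-\nu_X)(n-\nu_Y)$, and contradict balance; but you yourself flag that ``extracting the obstructions so that the forced monochromatic rectangle genuinely has near-optimal dimensions \dots\ is the main obstacle, and is exactly where the constant $1/\sqrt2$ gets pinned down.'' That is exactly the step that cannot be waved through. The split graph with skip edges is not bipartite, so the relevant min--max is general weighted matching duality (blossoms), not defect Hall, and it is not clear that the dual certificate takes the clean form of a column set $D$ together with a black/white partition of the rows; moreover, intersecting a row obstruction with a column obstruction produces two candidate constant blocks of opposite colors whose dimensions you have not controlled, and the ``few columns / few rows'' regimes are only asserted to fail. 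The paper takes a different and fully worked-out route: apply K\H{o}nig's theorem to the black subgraph to get a maximum matching $M$ and a vertex cover $S$ with $|S|=|M|$, split $V(M)$ into $A',A'',B',B''$, note that $(X-A')\cup(Y-B')$ spans no black edge, and in the harder case ($|A'|>n/\sqrt2$) explicitly harvest forks from the components (paths and even cycles) of the union of the black matching and a white matching restricted to $A'\cup B''$, pairing up cycles of length $2$ modulo $4$. Until you supply and prove the quantitative obstruction lemma and check the degenerate regimes, your lower bound --- and hence the equality $f(n)=(1-1/\sqrt2)n$ --- is not established.
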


This problem has connections to both graph theory and theoretical computer 
science. On one hand, it belongs to a class of problems seeking 
color-alternating subgraphs or general large  unavoidable subgraphs 
in two-edge colored graphs,  see for example
\cite{ag-99, ad-10, ef-01, kw-99}. On the other hand, special subgraphs in 
bi-colored complete bipartite graphs correspond to substructures in binary
matrices.  Finding $f(G,c)$ allows to determine the corresponding parameter in matrices
and to prove the conjecture of Tveretina et al. in \cite{tsz-elwob-09}.  
In particular, our result $f(n)=  (1- \frac{1}{\sqrt{2}} )n$ is an improvement of 
the previously known bound $f(n) \geq \frac{1}{2}(1-\frac{1}{\sqrt{2}})n$ from 
\cite{tsz-elwob-09}.  This in turn improves the lower bound on resolution for
ordered binary decision diagrams.
 
We first prove the result for $f(n)$, and then reduce the problem of  finding 
largest fork forests to a problem of finding perfect matchings of minimum weight 
in edge-weighted  graphs. With a known algorithm for the latter problem, our 
main theorem follows. In all the calculations we omit  floors and ceilings when 
their usage is clear from the context.

\section{Bounds on $f(n)$}

  For the upper bound, take $G$ to be a two-colored $K_{n,n}$ with edges of one 
  color forming a graph isomorphic to   $K_{\frac{n}{\sqrt{2}}, 
  \frac{n}{\sqrt{2}}}$.\\

  For the lower bound consider a balanced coloring of edges of $K_{n,n}$ with 
  partite sets $X$ and $Y$ in black and white. Let $G_1$ be the graph formed by 
  the black edges,  and  let $G_2$ be such a graph formed by the white edges. Let $M$ be 
  a maximum  matching of $G_1$. By K\H{o}nig's theorem applied to $G_1$, there 
  is a vertex cover $S$, of $G_1$ such that $|S|=|M|$. We have that $ S\subseteq 
  V(M)$.  Let $A= V(M) \cap X$, $B=V(M) \cap Y$, $A' = A\cap S$, $B' = B\cap S$, 
  $A'' = A-A'$, $B''= B-B'$. Note that $|A'|=|B''|$ and $|A''|=|B'|$.  Then we 
  see that the vertex set $(X-A') \cup (Y-B')$ induces no edges in $G_1$, as 
  otherwise $S$ would not be a vertex cover. Assume, without loss of generality, 
  that $|A'|\geq |B'|$.  \\~\\

\begin{figure}[h!]
  \centering
    \includegraphics{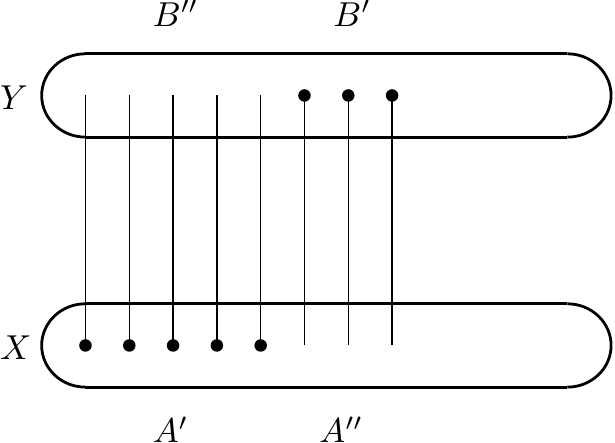}
  \caption{The matching edges $M$ and vertex cover $S$ of $G_1$, and the labelling
           of vertex sets introduced above.}
  \label{fig:sketch}
\end{figure}

  \noindent
  {\it Case 1}: $  |A'|\leq \frac{n}{\sqrt{2}}$.\\
  We have that $\frac{n^2}{2}= |E(G_1)| \leq
  n|A'| + (n-|A'|)|B'| \leq n|A'| + (n-|A'|)|A'| = 2n|A'|- |A'|^2$.
  So, from this we have that $|A'|\geq (1-\frac{1}{\sqrt{2}})n$.  Since
  $|X-A'| \geq (1-\frac{1}{\sqrt{2}})n$,
  there is a  fork forest centered at $B''$, using edges of $M$ and edges of 
  $G_2[B'', X-A']$  with $\min\{|B''|,|X-A'|\}\geq  (1-\frac{1}{\sqrt{2}})n$ forks.\\~\\

  \noindent {\it Case 2}: $ |A'|> \frac{n}{\sqrt{2}}$.\\
  Let $|A'| = \frac{n}{\sqrt{2}}+c$ for some positive $c$. We can
  assume that there is a matching $M'$ in $G_2$ of size at least
  $\frac{n}{\sqrt{2}}$, as otherwise Case 1 applies for $G_2$. 
  By counting, we can observe that at least $x := |M'| - |Y-B''| -
  |X-A'| \geq \frac{n}{\sqrt{2}} - 2(n-\frac{n}{\sqrt{2}}-c) =
  n(\frac{3}{\sqrt{2}}-2)+2c$ edges of  $M'$ have both endpoints in
  $A' \cup B''$. In the next two paragraphs we will show that 
  there are at least
  $\frac{x}{2}$  forks between $B''$ and $A'$ centered at $B''$:
  
  Consider  the union $G'=(M\cup M')[A' \cup B'']$, i.e., the black and white
  matching edges with one endpoint in $A'$ and the other in $B''$. There are $x$ edges 
  on $M'$ in this graph and each component  is  either an iterated even cycle or 
  a path ending with edges of $M$. It is easy to see that one could choose   at 
  least $\frac{k}{2}$ forks centered at $B''$ from  a component of $G'$ containing $k$ 
  edges of $M'$  and that is  either a path or a  cycle of length  divisible by 
  $4$. We also observe that one can choose $\frac{1}{2}(k_1+k_2)$ forks centered at $B''$ 
  from two cycles of $G'$ with $k_1$ and $k_2$ edges of $M'$,  where $k_1$ and 
  $k_2$ are odd,  by using a single  edge between these cycles and  additional 
  edges from the cycles.

\begin{figure}[h]
  \centering
  \subfloat[$P_5$]{\label{fig:match_a}{\includegraphics[height=0.4in]{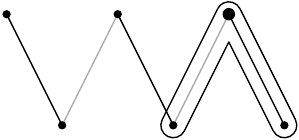}}}
    \hfill
  \subfloat[$C_8$]{\label{fig:match_b}{\includegraphics[height=0.4in]{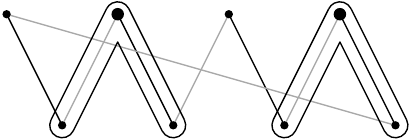}}}
    \hfill
  \subfloat[pair of $C_6$]{\label{fig:match_c}{\includegraphics[height=0.4in]{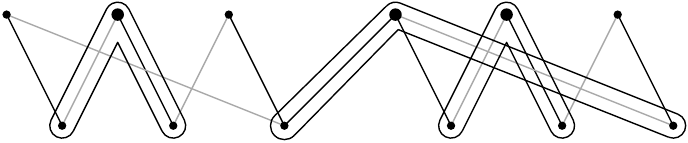}}}
    \hfill
  \subfloat[remaining $C_6$]{\label{fig:mtach_d}\includegraphics[height=0.4in]{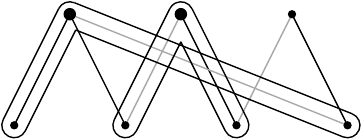}}
  \caption{Finding  forks in the components of $G'$, with one example for each type.
           White edges are drawn in light gray, the vertices belonging to $B''$ are
           positioned at the top. The outlined edges have been chosen to 
           be used in  forks. In the latter two cases this choice depends on the
           color of the single edge not contained in the cycles.}
  \label{fig:match}
\end{figure}

  So, we can pair up  all but at most one of the components of $G'$  that are 
  cycles of length $2$ modulo $4$. In the remaining  such component with $k$ 
  edges of  $M'$ we can choose $\frac{k+1}{2}$ forks centered at $B''$
  by using one additional edge going from the component into a previously
  unused vertex in $A'$ if available. If not, then all vertices in $A'$ have
  been used up from the previously chosen  forks,
  so we already have got $\lfloor\frac{1}{2}|A'|\rfloor \geq \lfloor\frac{n}{2\sqrt{2}}\rfloor$  forks.
  By combining the selected forks, we see that there are at least $\frac{x}{2}$ forks 
  centered in $B''$ and having leaves in $A'$.

  We observe that with each chosen  fork, at most two matching edges of $M$
  have become unavailable for later use, so there are at least $|B''|-x$ black
  matching edges with both endpoints in $A' \cup B''$  remaining. 
  These can be combined into  forks centered at $B''$ with nonedges
  leading into $X-A'$. This results in a total of
  $\frac{x}{2} + \min\{|B''|-x, |X-A'|\}$  forks. Since 

  \begin{eqnarray}
   \frac{x}{2} + \min\{|B''|-x, |X-A'|\}  
    & \ge& n(\frac{3}{2\sqrt{2}}-1)+c + \min\{n(2-\frac{2}{\sqrt{2}})-c, n(1-\frac{1}{\sqrt{2}})-c\} \nonumber \\
    &= & \min\{n(2-\frac{2}{\sqrt{2}}+\frac{3}{2\sqrt{2}}-1), n(1-\frac{1}{\sqrt{2}}+\frac{3}{2\sqrt{2}}-1)\} \nonumber \\
    &= & \min\{n(1-\frac{1}{2\sqrt{2}}), n(\frac{1}{2\sqrt{2}})\} = \frac{n}{2\sqrt{2}} \geq n(1-\frac{1}{\sqrt{2}}), \nonumber
  \end{eqnarray}
it follows that  $f(G,c) \geq  n(1-\frac{1}{\sqrt{2}})$.

\section{Algorithm}

We show that there is  an efficient algorithm for finding the largest fork 
forest centered at $X$ in $G$ by reducing this problem  to the problem of finding 
a perfect matching of minimum weight in an edge-weighted  graph~$G'$.    The 
case of a fork forest centered at~$Y$ is symmetric.\\

Informally, $G'$ is obtained from $G$ by first splitting each vertex of $X$ into
two adjacent vertices, with one of them being assigned the black edges incident
to the original vertex, and the other taking the white edges. Then all edges in $Y$ are added, and 
if $n$ is odd, one additional vertex is added  adjacent to all vertices of $Y$.\\
 
\begin{figure}[h]
  \centering
  \subfloat[original graph $G$]{\label{fig:transformation_a}{\includegraphics[width=0.3\textwidth]{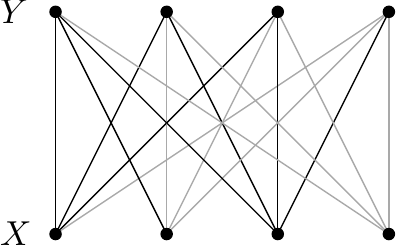}}}
    \hfill
  \subfloat[transformed graph $G'$]{\label{fig:transformation_b}\includegraphics[width=0.6\textwidth]{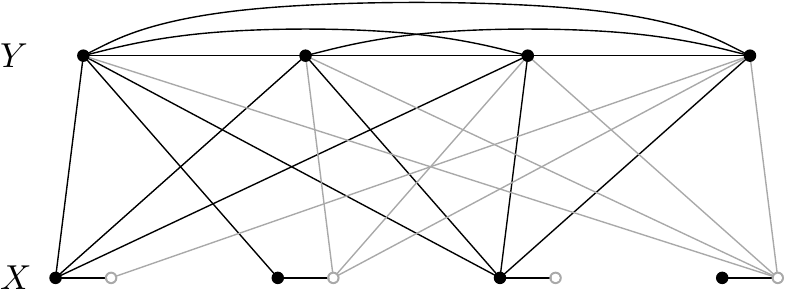}}
  \caption{A coloring of $G=K_{4,4}$ with white edges drawn in light gray,
           and its transformed version $G'$ on the right.
           For $x\in X$, vertices $x_b\in G'$ with the black edges incident to them
           are drawn in black, while $x_w \in G'$ with white incident edges are
           drawn in light gray.}
  \label{fig:transformation}
\end{figure}

\noindent
{\bf Construction}\\
For a $\{b,w\}$-coloring,  $c$, of    $G=K_{n,n}$ with partite sets $X$ and $Y$, 
   let $V(G')$ be a disjoint union  $Y'        \cup  \{ x_b:  x\in X\} 
\cup \{x_w: x\in X\}$, where $Y'=Y$ if $n$ is even and $Y'=Y\cup \{y\}$ if $n$ is odd.      Let $E(G')$ be the union of $ \{x_bx_w:  x\in 
X\}$, ~   $\{ yx_b:  c(yx)= b,x\in X, y \in Y \}$,   ~ $\{ yx_w:  c(yx)=w,x\in X, y \in Y\}$, and 
 all possible edges with endpoints in $Y'$.  Let   $\tau: E(G') \rightarrow 
\{0,1\}$ be such that $\tau(x_bx_w)=1$ for all $x\in X$,   and $\tau(e)=0$,  for all 
other edges.\\~\\

Further, if $M$ is a perfect matching in $G'$, denote by  $\fork(M)$ a fork forest in $G$ containing all 
forks on vertices $x, y, y'$ if $x_by\in M, x_wy'\in M$.  Recall that $|\fork(M)|$ is the number of forks in $\fork(M)$.


\begin{lemma}
If $M$ is a minimum weight perfect matching of~$(G', \tau)$ then $\fork(M)$ is a 
maximum fork forest of $(G, c)$ centered at $X$. 
\end{lemma}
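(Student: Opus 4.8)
The plan is to establish a bijective-style correspondence between perfect matchings of $(G',\tau)$ and fork forests of $(G,c)$ centered at $X$, under which the weight $\tau(M)$ of a matching is (up to an additive constant depending only on $n$) a decreasing function of the number of forks. Once this is set up, a minimum-weight perfect matching corresponds to a maximum fork forest, which is exactly the claim.

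First I would analyze what a perfect matching $M$ of $G'$ looks like locally at the gadget for a fixed $x\in X$. The two vertices $x_b,x_w$ can each be matched either to a vertex of $Y$ (via an edge of the appropriate color) or to each other. If $x_b$ is matched to some $y\in Y$ with $c(xy)=b$ and $x_w$ is matched to some $y'\in Y$ with $c(xy')=w$, then $y\neq y'$ and $\{x,y,y'\}$ is genuinely a fork in $G$ centered at $x$ (two edges of different colors at $x$); moreover distinct $x$'s use distinct $Y$-vertices since $M$ is a matching, so $\fork(M)$ is indeed a vertex-disjoint fork forest, and its size is the number of $x\in X$ for which both $x_b$ and $x_w$ are matched into $Y$. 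The remaining $Y$-vertices (those not used by any fork) together with the extra vertex $y$ when $n$ is odd must be matched among themselves using the clique on $Y'$; this is possible precisely when the number of such leftover vertices is even, which is automatic because $|Y'|$ has the same parity as $n+ (\text{$n$ odd}) $ — I would check the parity bookkeeping carefully here. The key numeric observation is: $\tau(M)$ counts exactly the number of $x\in X$ with $x_bx_w\in M$, i.e. $\tau(M) = n - |\fork(M)|$. Hence minimizing $\tau(M)$ over perfect matchings is the same as maximizing $|\fork(M)|$.

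Second, for the reverse direction I would show that every fork forest $F$ of $(G,c)$ centered at $X$ arises as $\fork(M)$ for some perfect matching $M$ with $\tau(M)=n-|F|$: for each fork $\{x,y,y'\}\in F$ put the edges $x_by,x_wy'$ into $M$ (choosing $y$ to be the black-neighbor and $y'$ the white-neighbor of $x$ along the fork); for every $x\in X$ not used by $F$, put $x_bx_w\in M$; the $Y'$-vertices not yet covered form an even-sized set and are matched arbitrarily inside the clique on $Y'$. This yields a perfect matching with $\tau(M)=n-|F|$, so $\min_M \tau(M) \le n - f(G,c)_X$, where $f(G,c)_X$ is the max fork-forest size centered at $X$; combined with the previous paragraph's $|\fork(M)| = n-\tau(M)$ for every $M$, both inequalities are tight, and a minimum-weight $M$ gives $|\fork(M)| = n - \min_M\tau(M) = f(G,c)_X$, which is the statement of the lemma.

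The main obstacle I anticipate is the parity argument ensuring that the leftover $Y'$-vertices always admit a perfect matching inside the clique — this is why the construction adds a dummy vertex $y$ when $n$ is odd, and one must verify that in both parities the set $Y' \setminus (\text{vertices used by forks})$ has even cardinality for \emph{every} choice of fork forest, not just the optimal one; otherwise the correspondence would only be a partial one. A second, minor point to be careful about is that a "fork" requires the two edges to have genuinely different colors and a common center, so one must confirm that the color constraints built into $E(G')$ (black edges only at $x_b$, white only at $x_w$) force exactly this and cannot produce a degenerate "fork" with $y=y'$ or with two edges of the same color. Both points are routine once the gadget is understood, so the proof is essentially the matching-to-fork-forest dictionary described above together with the identity $\tau(M) = n - |\fork(M)|$.
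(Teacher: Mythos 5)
Your proposal is correct and follows essentially the same route as the paper: both rest on the identity $\tau(M)=n-|\fork(M)|$ together with the observation that any fork forest $F$ centered at $X$ yields a perfect matching of weight $n-|F|$ (the paper phrases this as a contradiction with a hypothetical larger forest, you phrase it as a two-sided correspondence, but the content is identical). Your explicit parity check on the leftover $Y'$-vertices is a point the paper passes over with a one-line remark, and it does hold since $|Y'|$ is always even and each fork uses exactly two vertices of $Y$.
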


\begin{proof}
Let $M$ be a minimum weight perfect matching of $(G', \tau)$. 
Note that  the weight of $M$ is equal to the number of edges  $x_bx_w \in E(M)$. 
We see that $x\not \in V(\fork(M))$ if and only if  $x_bx_w \in E(M)$, so the weight of $M$ 
is  $n- |\fork(M)|$. 

Assume that $\fork(M)$ is not a largest fork forest  of $(G, c)$ centered at $X$.
Then, for  a larger fork forest $F'$ of $(G, c)$ centered at $X$, let  $M'$  be 
a perfect matching of $G'$ that contains edges $x_by$ and $x_w y'$ if $x,y,y'$ 
induces a fork of $F'$, and  edge $x_bx_w$, otherwise.  Note that  one can 
always match vertices of $Y$ that are not in $F'$ with remaining vertices of $Y'$. This 
matching $M'$ has weight $n- |F'|< n-|\fork(M)|$,  a contradiction. \end{proof}

%
%
%
%

In~\cite{gt-fsagg-91} it is shown that the time complexity of finding the minimum weight matching in a graph with 
$n$ vertices,  $m$ edges, and edge-weights $0$ or $1$ is 
$O(\sqrt{n\alpha(m,n) \log n} m \log n),$ where~$\alpha$ denotes 
 the slowly growing inverse of the Ackermann function. Since~$G'$ contains at most~$3n+1$ vertices and~$\frac{3}{2}(n^2 + n)$ edges,  the minimum weight perfect matching problem for~$(G', w)$ can be solved 
 in~$O(n^2 \log n \sqrt{n \alpha(n^2,n) \log n})$ time.
Thus, the main theorem follows.\\~\\

\noindent
{\bf  Acknowledgements} The authors thank Olga Tveretina for bringing the problem to their attention.

\bibliographystyle{plain}
\bibliography{bi-forks}

\end{document}